\documentclass[letterpaper, 10 pt, conference]{ieeeconf}
\IEEEoverridecommandlockouts                              
%
\usepackage{cite}
%
   \usepackage[pdftex]{graphicx}
\usepackage{array}
%
\usepackage{amsmath,amssymb,amsbsy,latexsym}
\usepackage[caption=false,font=footnotesize]{subfig}
\usepackage{url}
\usepackage{algorithm}
\usepackage{algpseudocode}
\usepackage{header}
\usepackage{mathtools}
\DeclarePairedDelimiter\floor{\lfloor}{\rfloor}
\newcommand\flr[2]{#1\lfloor{#2}#1\rfloor}
\newtheorem{theorem}{Theorem}
\newtheorem{proposition}{Proposition}
\usepackage{colortbl}
\begin{document}
%
\title{To Observe or Not to Observe: Queuing Game Framework for Urban Parking} 

\author{ Lillian J. Ratliff, Chase Dowling, Eric
Mazumdar, Baosen Zhang\thanks{ L. J. Ratliff, C. Dowling, and B. Zhang with Department Electrical Engineering, 
        University of Washington, Seattle, WA, USA
        {\tt\footnotesize $\{$ratliffl, cdowling,
        zhangbao$\}$@uw.edu}}\thanks{Eric Mazumdar is with the Department of
        Electrical Engineering and Computer Sciences at the University of
California, Berkeley, Berkeley, CA, USA {\tt\footnotesize mazumdar@eecs.berkeley.edu}} 
\thanks{This work is supported by NSF
 FORCES (Foundations Of Resilient
Cyber-physical Systems) CNS-1239166}
       }


\maketitle

\begin{abstract}
We model parking in urban centers as a set of parallel queues and overlay a game
theoretic structure that allows us to compare the user-selected (Nash) equilibrium to
the socially optimal equilibrium. We model arriving
drivers as utility maximizers and consider the game in which observing the queue
length is free as well as the game in which drivers must pay to observe the
queue length. In both games, drivers must decide between balking and joining. 
We compare the Nash induced welfare to the socially optimal welfare. We find
that gains to welfare do not require full information penetration---meaning, for
social welfare to increase, not everyone needs to pay to observe. Through
simulation, we explore a
more complex scenario where drivers decide based the queueing game
whether or not to enter a collection of queues over a network. 
We
examine the occupancy--congestion relationship, an important relationship for
determining the impact of parking resources on overall traffic congestion. Our simulated
models
use parameters informed by real-world data collected by the Seattle
Department of Transportation. 
\end{abstract}


%
\IEEEpeerreviewmaketitle

\section{Introduction}
An efficient transportation system is an integral part of a well-functioning
urban municipality. Yet there is no shortage of news articles and scientific
reports showing these systems, sometimes decades old, are being stressed to
their limits~\cite{frumkin:2002aa,resnik:2010aa}. 
In recent years, \emph{congestion of surface streets} is becoming
increasingly severe and is a major bottleneck of sustainable urban growth~\cite{schrank:2015aa}.
In addition to lost productivity, there are
significant health and environmental issues associated with unnecessary
congestion~\cite{levy:2010aa,zhang:2013aa}.

A significant amount---up to 40\% in U.S. cities---of all traffic on arterials
in urban areas stems from drivers circling while looking for parking~\cite{shoup:2006aa,shoup:2003aa}. This creates an unique opportunity for municipalities to mitigate congestion. Consequently,  
the problem of \emph{smart parking} has received significant attention from both
academia and government organizations. Numerous forecasting models have been developed to predict parking availability at various timescales~~\cite{caicedo:2012aa,chen:2012aa,sasanuma:2009aa,tiexin:2012aa} and different control stategies have been proposed to keep parking occupancy at target levels~\cite{caicedo:2012aa,chen:2012aa,sasanuma:2009aa,tiexin:2012aa}. 


Pricing, both static and dynamic, is the main tool used to incentivize drivers and control the parking system. A major difficulty in developing effective pricing strategies is the \emph{asymmetry} of information between parking managers and drivers. Municipal planners do not know the complex preferences of drivers, and drivers do not known the state of the system. Therefore price signals are often ignored by the drivers, leading to inefficiencies on both sides~\cite{bolton:2005aa}. A case in point is the parking
pilot study, SFpark, conducted by the San Francisco Municipal Transportation
Agency~\cite{sfpark:aa}. In this study
drivers changed their behavior only after the \emph{second} price adjustment
because of a spike in awareness of the program~\cite{pierce:2013aa}. It
is also interesting that as coin-fed meters are replaced by smarter meters and
smartphone apps, people are actually less cognizant of the cost of
parking~\cite{glasnapp:2014aa,carney:2013aa}. This motivates a key focus of this
paper: in contrast to considering pricing as the main incentive, we study how information access impacts behaviors of drivers. 

We model parking as system of parallel queues and impose a game theoretic structure on them. Each of the queues represents a street blockface along which drivers can park. The queue itself
is abstractly modeled as the roadways and circling behavior is the process of
queueing. The parking spots along blockfaces are the servers in the queue
model. Drivers are
modeled as utility maximizers deciding whether to park based on the reward for
parking versus its cost. We consider two
game settings: in the first, drivers observe the queue length and thus, make an
informed decision as to whether they should join the queue to find parking or
balk, meaning they opt-out of parking and perhaps choose another mode of
transit. In the second case, drivers do not observe \emph{a priori} the queue
length and thus choose to balk, join with out observing, or pay to observe the
queue after which they join or balk as in the setting of the first game. 

We characterize the
Nash equilibrium in both cases to the socially optimal solution and show that
there are inefficiencies. We develop a simulation tool that investigates how different parameter combinations such as network topology and utilization (occupancy) can impact wait time (congestion) and welfare of drivers\footnote{Code is available at \url{https://github.com/cpatdowling/net-queue}}. In particular, we show that in the information limited game, the Nash equilibrium can be very far from the social optimal, especially when the utilization factor is high (e.g. a busy downtown district). This suggests that significant improvements in waiting time and congestion levels can be achieved.   

The remainder the paper is organized as follows. In Section~\ref{sec:queue}, we
outline the basic queuing framework applied to urban parking. In Sections~\ref{sec:observe}
and~\ref{sec:unobserve}, we describe the free observation and costly observation
queuing game, respectively. In the former, we examine  
congestion--limited balking rates and the impact on social welfare and in both
sections, we discuss on-street versus off-street parking. We present a
queue--flow network model in Section~\ref{sec:flow} and show through simulations
the  utilization and wait time for different Nash and socially optimal
equilibria.
We present a
comparative analysis for a variety of parameter combinations. Finally, in
Section~\ref{sec:discussion}, we make concluding remarks and discuss future
directions.

\section{Queueing Framework}
\label{sec:queue}
We use an M/M/c/n queue to represent a collection of block faces that
collectively have an on-street parking supply of $c\geq 1$ (for background on queues see e.g. \cite{Gross2008}).  The number $n$ represents the
maximum number of customers in the system including those customers being served
(i.e.~parked) and those circling looking for parking. We make the following
assumptions: The arriving customers form a stationary Poisson process with mean
\emph{arrival rate} $\lambda>0$. The time that a customer parks for is assumed
to be exponential, which we model as the $c\geq 1$ parking spots \emph{serving}
customers with mean \emph{service rate} $\mu>0$. Waiting customers
are severed in the order of their arrival.


Using a standard framework for an M/M/c/n queue, we can calcuate the stationary
probability distribution for the queue length. Define the \emph{traffic
intensity} $\rho=\frac{\lambda}{c\mu}$
and let
 $Q_n(t)$ be the number of customers in the system at time $t$. Then
 $\{Q_n(t)\}_{t\geq 0}$ is a continuous time, ergodic Markov chain with state space $\{0,
\ldots, n\}$. The stationary probability distribution of having $k$ customers in
the system is given by
\begin{equation}
\textstyle  p_k(n)=\frac{d_k}{\sum_{k=0}^n d_k}, \ 0\leq k\leq n,
  \label{eq:pk}
\end{equation}
where 
\begin{equation}
  d_k=\left\{\begin{array}{ll}
    \frac{(\rho c)^k}{k!}, & 0\leq k\leq c-1\\
    \frac{(\rho c)^c}{c!}\rho^{k-c}, & k\geq c\end{array}\right.
  \label{eq:dk}
\end{equation}
We sometimes refer the number of the customers in the queue as the state of the system. Let $Z_k=X+Y_k$ be a random variable that measures the time spent in the system
when the state of the system is $k$ and where $X$ is a random variable
representing the service time and $Y_k$ is a random variable representing the
time that the customer spends in the queue. The random variables $X$ and $Y_k$
are independent, $X$ has an exponential distribution with density
$f(t)=\mu e^{-\mu t}$, and $Y_k$ (for $k\geq s$) has a gamma distribution with
density \cite{Walrand1983} 
\begin{equation}
\textstyle  g_k(t)=\frac{(c\mu)^{k-c+1}}{(k-c)!}t^{k-c}e^{-c\mu t}.
  \label{eq:Ydensity}
\end{equation}
If $h(t)$ is the waiting cost to a customer spending $t$ time units in the
system, then the expected waiting cost to a customer who arrives and finds the system in state $k$ is given by $E[h(Z_k)]$. 
While we can consider non-linear waiting cost functions, for simplicity we will
assume that it is a linear function with constant waiting cost parameter $C_w>0$,
i.e.~$h(t)=C_wt$.

In the following two sections, we consider two game theoretic formulations
overlaid on the queuing system. First, we consider the game in which arriving
customers can view the queue length and then decide whether or not to join or
balk. We refer to this game as the \emph{free observation queue game}. This setting represents a ideal situation where the entire state state information is available to all of the users, which is not currently achievable in practice. But this game is easy to analyze and serves as a useful comparison to the second game.

In the second game, we consider the setting where arriving customers do not \emph{a priori} know  the
queue length. Instead, they choose to either balk, join without knowing the
queue length, or pay a price to observe the queue after which they balk or join.
We refer to this game as the \emph{costly observation queue game}. In both these games, we study the impact of the maximum number of customers in
the system on efficiency and examine the difference between the socially optimal
and the user-selected equilibrium.


\section{Free Observation Queuing Game}
\label{sec:observe}
We first consider the observable queue game in which arriving customers know the
queue length and choose to join by maximizing their utility which is a function
of the reward for having parked and the cost of circling and paying for parking.
The \emph{nominal expected utility} of an arriving customer to the system in state $k$
is $\alpha_k=R-w_k$ where $R>0$ is the reward for parking. The \emph{total
expected utility} of a customer arriving to the system in state $k$ is given
by 
\begin{equation}
\textstyle  \beta_k=\alpha_k-\frac{C_p}{\mu}=R-\frac{C_w(k+1)}{\mu c}-\frac{C_p}{\mu}
  \label{eq:reward}
\end{equation}
where $C_p$
is the cost for parking. 
If the customer balks, the expected utility is zero. 

It can be easily verified that the sequence $\{\alpha_k\}$ is decreasing and as is
$\{\beta_k\}$. Furthermore,
the optimal strategy for a customer finding the queue in state $k$ and deciding
whether or not to join by maximizing their expected utility is to
join the queue if and only if $\beta_k\geq 0$.  In this case, if the decision to
join the queue depends on the customer optimizing their individual utility, then
the system will be a M/M/$c$/$n_b$ where
\begin{equation}
  n_b=\textstyle \flr{\Big}{\frac{R\mu c-C_pc}{C_w}}
  \label{eq:ns}
\end{equation}
is the \emph{balking level} and is
determined by solving $\beta_{n_b-1}\geq 0>\beta_{n_b}$. Let $x$ denote
the strategy of an arriving customer and suppose $x\in \{j,
b\}$ where $j$ represents \emph{joining} and $b$ represents \emph{balking}.
Hence, the equilibrium strategy
for customers is 
\begin{equation}
  x=\left\{\begin{array}{ll}
    j, & 0\leq k<n_b\\
    b, &\text{otherwise}\end{array}\right.
  \label{eq:eqstrat}
\end{equation}

The socially optimal strategy, on the other hand, is determined by maximizing
social welfare. For a M/M/$c$/$n$ queue, the total expected utility per unit time obtained by the
customers in the system is given by
\begin{equation}
  \textstyle U_{sw}(n)=\lambda \sum_{k=0}^{n-1}p_k(n)\beta_k
  \label{eq:netbenifit}
\end{equation}
\begin{theorem}[{\cite[Theorem~1]{knudsen:1972aa}}]
  There exists $n_{so}$ maximizing $U_{sw}(n)$ and $n_{so}\leq n_b$ so that
  $U_{sw}(n_b)\leq U_{sw}(n_{so})$.
  \label{thm:knudsen}
\end{theorem}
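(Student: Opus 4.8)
The plan is to reduce the three claims to two. Write $D_n=\sum_{k=0}^{n}d_k$ for the normalizing sum in \eqref{eq:pk} and $A_n=\sum_{k=0}^{n-1}d_k\beta_k$, so that $V(n):=U_{sw}(n)/\lambda=A_n/D_n$. The inequality $U_{sw}(n_b)\le U_{sw}(n_{so})$ is then automatic from the definition of a maximizer, once I have produced an $n_{so}$ that maximizes $U_{sw}$ and satisfies $n_{so}\le n_b$. So I would focus entirely on (i) existence of a maximizer and (ii) the bound $n_{so}\le n_b$, and I would extract both from the sign of the forward difference $U_{sw}(n{+}1)-U_{sw}(n)$.

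First I would compute that difference. From $A_{n+1}=A_n+d_n\beta_n$ and $D_{n+1}=D_n+d_{n+1}$ the numerator of $A_{n+1}/D_{n+1}-A_n/D_n$ factors, giving
\begin{equation*}
\operatorname{sign}\!\big(U_{sw}(n{+}1)-U_{sw}(n)\big)=\operatorname{sign}\!\Big(\beta_n-\tfrac{d_{n+1}}{d_n}\,V(n)\Big).
\end{equation*}
This is the crux identity: opening state $n$ raises welfare precisely when the private value $\beta_n$ of the marginal arrival exceeds the congestion term $\tfrac{d_{n+1}}{d_n}V(n)$, i.e.~the externality the selfish customer ignores. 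Since for $n\le n_b$ every admitted customer has $\beta_k\ge\beta_{n_b-1}\ge0$, we have $V(n_b)=A_{n_b}/D_{n_b}\ge0$; as $\beta_{n_b}<0\le\tfrac{d_{n_b+1}}{d_{n_b}}V(n_b)$, the identity gives the base case $U_{sw}(n_b{+}1)<U_{sw}(n_b)$.

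To turn this single drop into (i) and (ii) I would show that $U_{sw}$ is strictly decreasing on all of $\{n_b,n_b{+}1,\dots\}$; existence then follows because the global maximum must be attained on the finite set $\{0,\dots,n_b\}$, at some $n_{so}\le n_b$. Here I would exploit two structural facts from \eqref{eq:reward} and \eqref{eq:dk}: the sequence $\{\beta_k\}$ is arithmetic with constant gap $\delta=C_w/(\mu c)$, and $d_{n+1}/d_n=\rho$ for $n\ge c-1$. Writing $r_n=d_n/D_n$, one has $V(n{+}1)=(V(n)+r_n\beta_n)/(1+\rho r_n)$, and the desired inductive step $\beta_{n+1}<\rho V(n{+}1)$ simplifies to $\beta_n-\delta(1+\rho r_n)<\rho V(n)$, which holds because the hypothesis already gives $\beta_n<\rho V(n)$ and $\delta,r_n>0$. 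This is clean whenever $n_b\ge c-1$, the busy/high-utilization regime of interest in this paper, and completes the argument there.

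The step I expect to be the real obstacle is this single-crossing claim in the light-traffic regime $n_b<c-1$, where $d_{n+1}/d_n=\rho c/(n+1)$ is not constant but only obeys $d_{n+1}/d_n\ge\rho$ and is itself decreasing in $n$. Writing $q_n=d_{n+1}/d_n$, the inductive step there reduces to $\beta_n-\delta(1+q_nr_n)<q_{n+1}V(n)+r_n\beta_n(q_{n+1}-q_n)$; the last term is $\ge0$ on $\{n\ge n_b\}$ (as $\beta_n<0$ and $q_{n+1}\le q_n$), but because $q_{n+1}\le q_n$ one must still dominate the deficit $(q_n-q_{n+1})V(n)$ by the available slack $\delta(1+q_nr_n)$. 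I would close this with a uniform bound on the ratio decrements $q_n-q_{n+1}$ for $n<c$, or simply by verifying the finitely many states $n_b\le n<c-1$ directly; once $U_{sw}$ is shown to decrease on $\{n_b,n_b{+}1,\dots\}$, existence and $n_{so}\le n_b$---and hence $U_{sw}(n_b)\le U_{sw}(n_{so})$---follow exactly as above.
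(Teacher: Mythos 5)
Your route is sound and, in substance, it is the same first-difference/single-crossing argument that underlies the cited result: the paper gives no proof of Theorem~\ref{thm:knudsen} (it is quoted from Knudsen), but the unimodality recursion \eqref{eq:unimod} that it does derive later is exactly your sign identity iterated once, so you are on the paper's own track. Your reduction to existence plus $n_{so}\le n_b$, the identity $\operatorname{sign}\left(U_{sw}(n+1)-U_{sw}(n)\right)=\operatorname{sign}\left(\beta_n-\tfrac{d_{n+1}}{d_n}V(n)\right)$, the base case at $n=n_b$ (using $V(n_b)\ge 0$ and $\beta_{n_b}<0$), and the inductive step in the regime $n\ge c-1$ where $q_n\equiv\rho$ are all correct.

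The one step you leave open---the range $n_b\le n<c-1$ where $q_n=d_{n+1}/d_n=\rho c/(n+1)$ is strictly decreasing---is not actually an obstacle, and you need neither a uniform bound on $q_n-q_{n+1}$ nor a finite case check. Just split on the sign of $V(n)$. If $V(n)\ge 0$, the right side of your inductive inequality satisfies $q_{n+1}V(n)+r_n\beta_n(q_{n+1}-q_n)\ge q_{n+1}V(n)\ge 0$, while the left side obeys $\beta_n-\delta(1+q_nr_n)<\beta_n\le\beta_{n_b}<0$, so the step closes without even invoking the hypothesis. If $V(n)<0$, then $q_{n+1}\le q_n$ gives $q_{n+1}V(n)\ge q_nV(n)$, and the inductive hypothesis $\beta_n<q_nV(n)$ yields $\beta_n-\delta(1+q_nr_n)<\beta_n<q_nV(n)\le q_{n+1}V(n)$, which is at most the right side. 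Either way $U_{sw}(n+1)<U_{sw}(n)$ for all $n\ge n_b$, the maximum is attained on the finite set $\{0,\dots,n_b\}$, and the theorem follows exactly as you state. The ``deficit'' $(q_n-q_{n+1})V(n)$ you flag as the hard part only appears when $V(n)>0$, which is precisely the case in which the desired inequality is trivially true.
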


A consequence of the above theorem is that the socially optimal utility per
customer is greater than the selfishly obtained one and equality only holds
because the function $U_{sw}(n)$ is defined over $\mathbb{R}_{\geq0}$. Ideally we would like to
adjust the utility of customers to close the gap between the social optimum and
the user-selected equilibrium.
In order to obtain the socially optimal balking rate $n_{so}$ we can adjust the
price for parking $\hat{C}_p=C_p+\Delta C_p$.
\begin{proposition}
  The pricing mechanism $\hat{C}_p$ that achieves the socially optimal balking level
  $n_{so}$ is determined by solving 
$\alpha_{n_{so}}<\hat{C}_p/\mu\leq
  \alpha_{n_{so}-1}$.
  \label{prop:so}
\end{proposition}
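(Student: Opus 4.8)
The plan is to mirror the derivation of the selfish balking level $n_b$ in~\eqref{eq:ns}, but with the modified parking price $\hat{C}_p$ substituted for $C_p$. The central observation is that adjusting the parking price leaves the nominal utility $\alpha_k = R - \frac{C_w(k+1)}{\mu c}$ untouched and shifts the total utility in~\eqref{eq:reward} only by the additive term $-\hat{C}_p/\mu$; thus under the new price the total expected utility of a customer arriving to state $k$ becomes $\hat{\beta}_k = \alpha_k - \hat{C}_p/\mu$.

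First I would recall that, exactly as in the free-observation game with price $C_p$, a utility-maximizing customer facing price $\hat{C}_p$ joins if and only if $\hat{\beta}_k \geq 0$. Since $\{\alpha_k\}$ is decreasing, so is $\{\hat{\beta}_k\}$, and hence this individually optimal policy is again a threshold policy whose crossover point defines the induced balking level, just as in~\eqref{eq:eqstrat}. This reduces the problem to choosing $\hat{C}_p$ so that the induced threshold coincides with $n_{so}$.

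Next I would impose that the induced balking level equals $n_{so}$, which by the definition of the threshold requires $\hat{\beta}_{n_{so}-1} \geq 0 > \hat{\beta}_{n_{so}}$. Substituting $\hat{\beta}_k = \alpha_k - \hat{C}_p/\mu$ into both inequalities and solving for $\hat{C}_p/\mu$ yields the two bounds $\hat{C}_p/\mu \leq \alpha_{n_{so}-1}$ and $\hat{C}_p/\mu > \alpha_{n_{so}}$, which together give the claimed condition $\alpha_{n_{so}} < \hat{C}_p/\mu \leq \alpha_{n_{so}-1}$.

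Finally, to confirm that such a price actually exists I would note that the defining interval is nonempty precisely because $\{\alpha_k\}$ is strictly decreasing, so $\alpha_{n_{so}} < \alpha_{n_{so}-1}$. I would also remark that since $n_{so} \leq n_b$ by Theorem~\ref{thm:knudsen}, the required price satisfies $\hat{C}_p \geq C_p$, i.e.~$\Delta C_p \geq 0$, so the mechanism raises the price to discourage socially excessive joining. The only mild obstacle is bookkeeping on the weak versus strict inequalities, so that state $n_{so}-1$ still joins while state $n_{so}$ balks; everything else is direct substitution.
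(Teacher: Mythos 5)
Your proposal is correct and follows essentially the same route as the paper: substitute the new price into $\beta_k$ to get $\hat{\beta}_k=\alpha_k-\hat{C}_p/\mu$, impose the threshold condition $\hat{\beta}_{n_{so}-1}\geq 0>\hat{\beta}_{n_{so}}$, and rearrange to obtain $\alpha_{n_{so}}<\hat{C}_p/\mu\leq\alpha_{n_{so}-1}$. Your added remarks on nonemptiness of the interval and on $\Delta C_p\geq 0$ via Theorem~\ref{thm:knudsen} are correct bonuses not present in the paper's proof.
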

\begin{proof}
  The goal is to find $\Delta C_p$ such that $n_{so}$ is the balking rate. Let
  the reward under the new price of parking $\hat{C}_p=C_p+\Delta C_p$ be
  \begin{equation}
    \hat{\beta}_{k}=\textstyle R-\frac{C_w (k+1)}{\mu c}-\frac{C_p+\Delta C_p}{\mu}
    \label{eq:newreward}
  \end{equation}
  We
  know that $n_{so}$ will be the balking rate if and only if
  $\hat{\beta}_{n_{so}-1}\geq 0>\hat{\beta}_{n_{so}}$. Hence, 
  \begin{align}
   \hat{\beta}_{n_{so}-1}&\textstyle =R-\frac{C_w (n_{so}-1)}{\mu c}-\frac{C_p+\Delta
    C_p}{\mu}>0\\
    &\textstyle \geq R-\frac{C_w n_{so}}{\mu c}-\frac{C_p+\Delta
    C_p}{\mu}
    \label{eq:ll}
  \end{align}
  Rearranging, we get $\alpha_{n_{so}}<\hat{C}_p/\mu\leq
  \alpha_{n_{so}-1}$.
\end{proof}
\subsection{Congestion--Limited Balking Rate}
Instead of adjusting the price of parking to close the gap between the socially
optimal solution and the user--selected equilibrium, consider the
problem of designing the balking rate to achieve a particular level of
parking--related congestion. For many municipals, congestion would be the main objective. 


In order to meet this objective, we can adjust the price of parking by selecting
$\Delta C_p$ in our game
theoretic framework so that the balking level $n_b$, being the number of cars in
the queuing system after which arriving customers decide to balk instead of
join, 
is set
to be the desired number of vehicles equaling 10\% of the total volume over the
period of interest which we denote by $n_{cl}$. 
\begin{proposition}
  The pricing mechanism $\hat{C}_p$ that achieves the congestion--limited balking level
  $n_{cl}$ is determined by solving 
$\alpha_{n_{cl}}<\hat{C}_p/\mu\leq
  \alpha_{n_{cl}-1}$.
\end{proposition}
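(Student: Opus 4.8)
The plan is to reuse the argument of Proposition~\ref{prop:so} essentially unchanged, observing that its derivation never exploited the fact that the target level was socially optimal---it used only that the target is to be realized as the balking level of the price-adjusted queue. First I would record that, under the shifted parking price $\hat{C}_p = C_p + \Delta C_p$, the adjusted total utility is $\hat{\beta}_k = \alpha_k - \hat{C}_p/\mu$, exactly as in~\eqref{eq:newreward}. Since shifting the price by a constant subtracts the same quantity $\hat{C}_p/\mu$ from every $\beta_k$, the sequence $\{\hat{\beta}_k\}$ inherits the strict monotone decrease of $\{\beta_k\}$, so the individually optimal join-iff-$\hat{\beta}_k\ge 0$ rule from Section~\ref{sec:observe} again produces a single threshold; that threshold equals $n_{cl}$ precisely when $\hat{\beta}_{n_{cl}-1}\ge 0 > \hat{\beta}_{n_{cl}}$.

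The second step is purely algebraic: substitute $\hat{\beta}_k = \alpha_k - \hat{C}_p/\mu$ into both halves of this two-sided condition and isolate $\hat{C}_p/\mu$. The left inequality $\hat{\beta}_{n_{cl}-1}\ge 0$ gives $\hat{C}_p/\mu \le \alpha_{n_{cl}-1}$, and the right inequality $\hat{\beta}_{n_{cl}}<0$ gives $\hat{C}_p/\mu > \alpha_{n_{cl}}$. Combining the two yields $\alpha_{n_{cl}} < \hat{C}_p/\mu \le \alpha_{n_{cl}-1}$, which is the claimed characterization, and the corresponding adjustment is then recovered as $\Delta C_p = \hat{C}_p - C_p$.

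Since everything collapses to a one-line rearrangement, there is no real obstacle in the derivation itself; the only point worth checking is \emph{existence}, i.e.\ that the admissible interval $(\alpha_{n_{cl}},\,\alpha_{n_{cl}-1}]$ for $\hat{C}_p/\mu$ is nonempty. This follows because $\alpha_k = R - C_w(k+1)/(\mu c)$ is strictly decreasing in $k$ with constant step $C_w/(\mu c) > 0$, so $\alpha_{n_{cl}} < \alpha_{n_{cl}-1}$ and a valid $\hat{C}_p$ always exists. I would also note that, unlike the socially optimal case where Theorem~\ref{thm:knudsen} forces $n_{so}\le n_b$ and hence $\Delta C_p\ge 0$, here $n_{cl}$ is an exogenous design target, so $\Delta C_p$ may be negative (a subsidy) when $n_{cl} > n_b$; the characterization above is agnostic to this sign.
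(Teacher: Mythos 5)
Your proposal is correct and follows exactly the route the paper intends: the paper itself states that this proposition ``is proved in the same way as Proposition~1'' (replacing $n_{so}$ by $n_{cl}$ in the threshold condition $\hat{\beta}_{n_{cl}-1}\geq 0>\hat{\beta}_{n_{cl}}$ and rearranging), which is precisely your argument. Your added remarks on nonemptiness of the interval and on the possible sign of $\Delta C_p$ are correct bonuses but not part of the paper's (omitted) proof.
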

The above proposition is proved in the same way as Proposition~\ref{prop:so};
hence, we omit it.

Note that the value of $n_{cl}$ may not be equal to $n_{so}$
since the objectives that produce these values may not be aligned. Thus,
designing the price of parking to maintain a certain level of congestion in a
city may not be socially optimal. Similar results have been shown in the
classical queuing game literature with regards to designing a toll that
maximizes revenue (see, e.g.,~\cite[Section~6]{knudsen:1972aa}).
\begin{proposition}
  Whether or not $n_{cl}\leq n_{so}$ or $n_{cl}>n_{so}$, $U_{sw}(n_{cl})\leq
  U_{sw}(n_{so})$. Furthermore, if $n_{cl}\leq n_{so}$,
 then $U_{sw}(n_{cl})= U_{sw}(n_{so})$.
\end{proposition}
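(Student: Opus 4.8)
The plan is to read both assertions off the global optimality of $n_{so}$ established in Theorem~\ref{thm:knudsen} together with the shape of the welfare curve $U_{sw}$. The first assertion needs no case analysis: Theorem~\ref{thm:knudsen} states that $n_{so}$ maximizes $U_{sw}$ over all admissible balking levels, so for the congestion-limited level $n_{cl}$---whether it lies below or above $n_{so}$---we immediately obtain $U_{sw}(n_{cl})\le U_{sw}(n_{so})$. The ``whether or not'' clause is therefore only emphasizing that this bound is insensitive to the ordering of $n_{cl}$ and $n_{so}$.

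For the second assertion I would first expose the monotonicity of $U_{sw}$ on either side of its maximizer. Writing $U_{sw}(n)=\lambda S_n/D_n$ with $S_n=\sum_{k=0}^{n-1}d_k\beta_k$ and $D_n=\sum_{k=0}^{n}d_k$, a direct computation shows that the forward difference $U_{sw}(n+1)-U_{sw}(n)$ carries the sign of $d_n\beta_n D_n-d_{n+1}S_n$. Using that $\{\beta_k\}$ is decreasing (Section~\ref{sec:observe}) and changes sign at the balking level, together with $n_{so}\le n_b$ from Theorem~\ref{thm:knudsen}, I would argue that this quantity switches sign exactly once, so that $U_{sw}$ is unimodal: non-decreasing for $n\le n_{so}$ and non-increasing thereafter. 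When $n_{cl}\le n_{so}$ this places $n_{cl}$ on the non-decreasing branch and writes the gap as a telescoping sum of nonnegative increments, $U_{sw}(n_{so})-U_{sw}(n_{cl})=\sum_{n=n_{cl}}^{n_{so}-1}\bigl(U_{sw}(n+1)-U_{sw}(n)\bigr)\ge 0$.

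The step I expect to be the main obstacle is collapsing this last sum to zero, i.e.\ upgrading the bound from $U_{sw}(n_{cl})\le U_{sw}(n_{so})$ to the claimed \emph{equality}. Since every summand is nonnegative on the non-decreasing branch, equality forces each increment with $n_{cl}\le n<n_{so}$ to vanish, which is precisely the condition $d_n\beta_n D_n=d_{n+1}S_n$ throughout that range---that the welfare has already saturated at $n_{cl}$. The heart of the argument is thus to show that, under the hypothesis $n_{cl}\le n_{so}$, the congestion target already sits at the top of the unimodal curve (equivalently, that $n_{cl}$ coincides with the maximizer), since otherwise the increasing branch renders the inequality strict. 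Pinning down this coincidence is delicate because $p_k(n)$ depends on $n$ through the normalizer $D_n$, so $U_{sw}(n)$ is not a simple truncation of $U_{sw}(n+1)$ and the forward difference pits the positive ``new customer'' term $d_n\beta_n$ against the negative renormalization term $-d_{n+1}S_n$; controlling this competition---and verifying that the decreasing, sign-changing sequence $\{\beta_k\}$ admits only a single sign flip---is where the real work lies, after which the telescoping and the first-assertion comparison are routine.
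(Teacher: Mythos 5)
Your treatment of the first assertion is fine and coincides with the paper's: the paper's entire proof is the one-sentence remark that the claim ``is due to the fact that $n_{so}$ is the maximizer of $U_{sw}$,'' which, as you note, requires no case split on the ordering of $n_{cl}$ and $n_{so}$.

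The gap is in the second assertion, and you have located it precisely without closing it: you reduce $U_{sw}(n_{cl})=U_{sw}(n_{so})$ to the requirement that every forward difference $U_{sw}(n+1)-U_{sw}(n)$ vanish for $n_{cl}\le n<n_{so}$, and then concede that establishing this ``is where the real work lies.'' That work cannot be done, because your own sign analysis shows the opposite: on the ascending branch the increments are strictly positive. Concretely, if $U_{sw}(n+1)-U_{sw}(n)\le 0$ for some $n+1<n_{so}$, then the unimodality property used in the paper's next proposition (equivalently, your formula that the increment carries the sign of $d_n\beta_n D_n-d_{n+1}S_n$, combined with $\{\beta_k\}$ decreasing) forces all subsequent increments to be strictly negative, contradicting that $n_{so}$ maximizes $U_{sw}$. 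Hence $U_{sw}(n_{cl})<U_{sw}(n_{so})$ strictly whenever $n_{cl}\le n_{so}-2$, with at most a tie possible at $n_{cl}=n_{so}-1$. So the claimed equality fails except in degenerate cases. You should not fault your argument for this: the paper offers no justification for the ``furthermore'' clause either---its one-line proof covers only the inequality---and the equality as stated appears to be an error (the meaningful statement in this neighbourhood is the following proposition, which compares $n_{cl}$ to $n_b$ rather than to $n_{so}$). The honest output of your approach is the strict inequality on the ascending branch, not the claimed equality, and you should say so rather than leave the saturation condition as an open obstacle.
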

  The proof of the above proposition is due to the fact that $n_{so}$ is the maximizer of
 $U_{sw}$. It tells us that selecting the balking rate to limit congestion may
 result in a decrease in social welfare.
 
\begin{proposition}
  If $n_{cl}\leq n_{b}$, where $n_b$ is the user-selected
 balking rate, then $U_{sw}(n_b)\leq U_{sw}(n_{cl})$ and vice versa.
\end{proposition}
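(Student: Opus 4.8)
The plan is to exploit the single-peakedness of the social welfare function $U_{sw}$ that underlies Theorem~\ref{thm:knudsen}. Recall that the structural fact behind that theorem is that the marginal social benefit of raising the admission cap, $U_{sw}(n+1)-U_{sw}(n)$, is decreasing in $n$ and changes sign exactly once: it is nonnegative for $n<n_{so}$ and nonpositive for $n\geq n_{so}$. Consequently $U_{sw}$ is nondecreasing up to $n_{so}$ and nonincreasing thereafter, attaining its maximum at $n_{so}$, and Theorem~\ref{thm:knudsen} places the selfish cap on the nonincreasing branch, $n_{so}\leq n_{b}$.

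First I would write the welfare gap as a telescoping sum of these increments,
\begin{equation}
  U_{sw}(n_b)-U_{sw}(n_{cl})=\sum_{n=n_{cl}}^{n_b-1}\big(U_{sw}(n+1)-U_{sw}(n)\big),
  \label{eq:telescope}
\end{equation}
which is valid whenever $n_{cl}\leq n_b$, with the sum of opposite orientation handling the reverse (``vice versa'') case $n_b\leq n_{cl}$. Every index $n$ appearing in \eqref{eq:telescope} with $n\geq n_{so}$ contributes a nonpositive increment by the sign property above.

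Next I would split on the location of $n_{cl}$ relative to the peak. If $n_{so}\leq n_{cl}\leq n_b$, then all indices in \eqref{eq:telescope} lie on the nonincreasing branch, each summand is nonpositive, and hence $U_{sw}(n_b)\leq U_{sw}(n_{cl})$; symmetrically, for $n_b\leq n_{cl}$ the analogous sum is again taken over the nonincreasing branch and yields $U_{sw}(n_{cl})\leq U_{sw}(n_b)$, which is the reverse direction. In this operationally relevant regime the proposition is therefore just monotonicity of $U_{sw}$ on $[n_{so},\infty)$, exactly paralleling why $n_{so}$ being the maximizer settles the preceding propositions.

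The hard part is the residual case $n_{cl}<n_{so}$, where $n_{cl}$ falls on the \emph{increasing} branch: the sum in \eqref{eq:telescope} then mixes the positive increments accrued on $\{n_{cl},\dots,n_{so}-1\}$ with the negative ones on $\{n_{so},\dots,n_b-1\}$, so its sign is no longer fixed by monotonicity alone. Indeed the literal inequality can fail here, since $U_{sw}(0)=0$ while $U_{sw}(n_b)\geq 0$ (all admitted customers below $n_b$ have $\beta_k\geq 0$), so the clean claim genuinely requires $n_{cl}\geq n_{so}$. To handle the boundary carefully I would compare the two partial sums directly, bounding the congestion externality imposed on already-parked and circling customers against the nominal per-customer gain $\beta_k$ using the explicit stationary weights in \eqref{eq:pk}--\eqref{eq:dk}; I expect this increasing-branch bookkeeping, rather than any single algebraic identity, to be the main obstacle, and the cleanest resolution is to state the result for $n_{cl}\geq n_{so}$, where it reduces to the monotonicity argument above.
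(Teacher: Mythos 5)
Your proof rests on the same pillar as the paper's: unimodality of $U_{sw}$ (increments nonnegative below the peak $n_{so}$, nonpositive above it) combined with Theorem~\ref{thm:knudsen}'s guarantee that $n_{so}\leq n_b$. The paper establishes unimodality via the recursion \eqref{eq:unimod} and then simply asserts the proposition ``is implied by'' it; you take unimodality as the structural content of Knudsen's theorem and make the implication explicit with a telescoping sum, which is a cleaner way to see which increments matter. The valuable part of your write-up is the caveat: when $n_{cl}<n_{so}$ the telescoped increments have mixed signs and unimodality alone does not deliver $U_{sw}(n_b)\leq U_{sw}(n_{cl})$; your observation that $U_{sw}(0)=0\leq U_{sw}(n_b)$ (since every admitted class $k<n_b$ has $\beta_k\geq 0$) shows the literal inequality can reverse there. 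The paper's one-line proof silently skips this regime, so the proposition as stated really does need the qualifier $n_{cl}\geq n_{so}$, exactly as you conclude; the ``vice versa'' direction is unconditional because $n_b\geq n_{so}$ places the entire sum on the nonincreasing branch. In short: same method, but your version is more careful and exposes a gap in the paper's own argument rather than containing one itself.
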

\begin{proof}
 The result is implied by the fact that 
that $U_{sw}(n)$ is \emph{unimodal}, i.e.~$U_{sw}(n)-U_{sw}(n-1)\leq 0$ implies
that $U_{sw}(n+1)-U_{sw}(n)< 0$. Barring a little algebra, this is almost trivially true since
$\{\beta_k\}$ is a decreasing sequence; indeed,
\begin{align}
  U_{sw}(n+1)-U_{sw}(n)=&\textstyle \rho\frac{D_{n-1}}{D_{n+1}}(U_{sw}(n)-U_{sw}(n-1))\notag\\
  &\textstyle -\frac{d_{n-1}}{D_{n+1}}(\beta_{n-1}-\beta_n)
  \label{eq:unimod}
\end{align}
Since $U_{sw}(n)-U_{sw}(n-1)\leq 0$ by assumption and
$\{\beta_k\}$ is decreasing, $U_{sw}$ is unimodal. 
\end{proof}

The preceding propositions tell us that we can design the balking level
by adjusting the price to match a particular desired level of congestion, we
must be careful about how this level of congestion is selected since will impact
social welfare. In particular, selecting $n_{cl}$ will
result in a decrease in the social welfare as compared to the socially optimal
balking rate; on the other hand, it can result in an increase in social welfare if selected to
be less than the user-selected balking rate $n_b$.
\subsection{Example: Off-Street vs. On-Street Parking}
\label{sec:ex1}
Suppose customers have two alternatives. They can either choose on-street
parking by selecting to enter a M/M/$c$/$n$ queue as above with service time $1/\mu$ or they can choose
off-street parking which we model as a M/M/$\infty$ queue (infinitely available
spots) with expected service time per customer $1/\mu$. We assume the
reward $R$ is the same for both cases. The utility
for off-street parking is 
\begin{equation}
  U_{off}=\textstyle R-\frac{C_{off}}{\mu}
  \label{eq:uoff}
\end{equation}
where $C_{off}$ is the cost for off-street parking per unit time. The utility
for joining the on-street parking queue is 
\begin{equation}
  U_{on}(k)=\textstyle R-\frac{C_{on}}{\mu}-\frac{C_w(k+1)}{c\mu}
  \label{eq:uoff-1}
\end{equation}
where $C_{on}$ is the cost per unit time for on-street parking, $C_w$ is the
cost per unit time for waiting in the queue (circling for parking), and $k$ is
the state of the queue. In essence, we consider that, when a customer balks, they
choose off-street parking which represents the \emph{outside option}. Hence, we can
determine the rate at which people choose off-street in the same way as we
determined the balking rate above. In particular, we find the off-street balking
level $n_{off}$ for
which $U_{on}(n_{off}-1)\geq U_{off}>U_{on}(n_{off})$. Hence, we have that
\begin{equation}
  n_{off}= \textstyle \flr{\Big}{c\frac{C_{off}-C_{on}}{C_w}}.
  \label{eq:noff}
\end{equation}
In the sequel we will explore this example in more detail.

\section{Costly Observation Queuing Game}
\label{sec:unobserve}
We now relax the above framework so that arriving customers do not observe the
queue length without paying a price. More specifically, suppose now that we have a M/M/c/n queue and that when customers arrive they can
either balk, join, or pay a cost to observe the queue length after which they
decide to balk or join. For on-street parking where there is an smart phone app
to which a customer can pay a subscription fee to gain access to information 
or
choose not to, this model makes sense. We take the theoretical model
from~\cite{hassin:2011ab}.

Assume that each customer chooses to observe the queue with probability $P_o$ at
a cost $C_o$, balks without observing with probability $P_b$, and joins with out
observing with probability $P_j$. We use the notation $P=(P_o,P_b,P_j)\in
\Delta(3)$ for the strategy of arriving
drivers where $\Delta_2=\{P=(P_o,P_b,P_j)|\ P_i\geq 0, i\in \{o,b,j\}, \
P_o+P_j+P_b=1\}$ is the strategy space, i.e.~the 2--simplex.
The \emph{effective arrival rate} for this queue is then
\begin{equation}
  \tilde{\lambda}=\left\{\begin{array}{ll}
    (1-P_b)\lambda, & k<n_b\\
    P_j\lambda, & k\geq n_b\end{array}\right.
  \label{eq:effective_lambda}
\end{equation}
where $k$ is the queue length and $n_b=\floor{\frac{R\mu c-C_pc}{C_w}}$ is the selfish balking level for the
observable case. Of course, as before, we assume that $n_b\geq 1$ to avoid the
trivial solution where $P_b=1$ is a dominant strategy. In addition, we assume
$n\geq n_b>c$ since if $c\leq n<n_b$ then users would be forced to balk $n$ and
we would just replace $n_b$ in the above equations with $n$. The only other case
is
$n<c\leq n_b$ and it is non-sensical since $c$
is the number of servers. 
We remark that if
$C_o=0$, then the game reduces to the observable game since $P_o=1$. Hence, we
investigate the case when $C_o>0$. 

The stationary probability distribution is as before (see \eqref{eq:pk}) except
we use the \emph{effective traffic intensity} $\rho=\frac{\tilde{\lambda}}{c\mu}$.
In particular, we write the balance equations
\begin{eqnarray}
  (1-P_b)\lambda p_k^n&=& (k+1)\mu p_{k+1}^n, \ 0\leq k<c \\ 
  (1-P_b)\lambda p_k^n&=& c\mu p_{k+1}^n, \ c\leq k <n_b\\
  P_j\lambda p_k^n &=&  c\mu p_{k+1}^n, \ n_b\leq k\leq n
  \label{eq:balanceunobsrv}
\end{eqnarray}
and we let $\eta=P_j\rho$, $\xi=(1-P_b)\rho$. Then, 
\begin{equation}
  p_k^n=\left\{\begin{array}{ll}
    \frac{c^k\xi^k}{k!}p_0^n, & 0\leq k <c\\
    \frac{c^c\xi^k}{c!}p_0^n, & c\leq k <n_b\\
    \eta^{k-n_b}\xi^{n_b}\frac{c^c}{c!}p_0^n, & n_b\leq k\leq n
  \end{array}\right.
  \label{eq:pk-n}
\end{equation}
so that
\begin{equation}
  p_0^n=\textstyle\left(
  \sum_{k=0}^{c-1}\frac{(c\xi)^k}{k!}+\sum_{k=c}^{n_b-1}\frac{c^c\xi^k}{c!}+\frac{\xi^{n_b}c^c}{c!}\frac{1-\eta^{n-1-n_b}}{1-\eta}
  \right)^{-1}
  \label{eq:po}
\end{equation}
Note that we now use the more compact notation $p_k(n)\equiv p_k^n$ and
similarly, we use the notation $U(n)\equiv U^n$ for utilities.

Once the customer knows the queue length
then their reward is the same as in the observable case,
i.e.~$\beta_k=R-w_k-C_p/\mu$. However, since they do not know \emph{a priori} the queue
length, the customer must make the decision as to joining, balking, or observing
by maximizing their expected utility.

 The utility for observing the queue is given by
\begin{align}
  U_o^n&(P)\textstyle=\sum_{k=0}^{n_b-1}p_k^n\beta_k-C_o\\
  &\textstyle=p_0^n\Big[ \left(R-\frac{C_p}{\mu}\right)\left(
  \sum_{k=0}^{c-1}\frac{c^k\xi^k}{k!}+\sum_{k=c}^{n_b-1}\frac{c^c\xi^k}{c!}
  \right)\notag \\
  &\textstyle- \frac{C_w}{c\mu}\left(
  \sum_{k=0}^{c-1}\frac{c^k\xi^k(k+1)}{k!}+\sum_{k=c}^{n_b-1}\frac{c^c\xi^k(k+1)}{c!}
  \right)\Big]-C_o,
  \label{eq:Uo-n}
\end{align}

the utility for joining without observing is given by
\begin{align}
\textstyle  U_j^n&(P)=\textstyle\sum_{k=0}^{n-1}p_k^n\beta_k\\
  &\textstyle=p_0^n\Big[ \left(R-\frac{C_p}{\mu}\right)\Big(
  \sum_{k=0}^{c-1}\frac{c^k\xi^k}{k!}+\sum_{k=c}^{n_b-1}\frac{c^c\xi^k}{c!}\notag\\
  & \textstyle+\sum_{k=n_b}^{n-1} \eta^{k-n_b}\xi^{n_b}\frac{c^c}{c!}\Big)
  - \frac{C_w}{c\mu}\Big(
  \sum_{k=0}^{c-1}\frac{c^k\xi^k(k+1)}{k!}\notag\\
&\textstyle+\sum_{k=c}^{n_b-1}\frac{c^c\xi^k(k+1)}{c!}+\sum_{k=n_b}^{n-1}
\eta^{k-n_b}\xi^{n_b}\frac{c^c(k+1)}{c!}
  \Big)\Big],
  \label{eq:Uj-n}
\end{align}
and the utility for balking is $U_b^n(P)\equiv0$. 
\begin{proposition}
  A symmetric, mixed Nash equilibrium exists for the game $(U_o^n, U_b^n,
  U_j^n)$.
\end{proposition}
\begin{proof}
The above proposition is a direct consequence of 
Nash's result for finite games~\cite{nash:1951aa} that states for any finite game
there exists a mixed Nash equilibrium.   
\end{proof}
On the other hand, if we were to consider a queue where $n\rar\infty$
  (i.e.~with an infinite number of players), then Nash's result would no longer
  hold. This framework is explored in the working paper~\cite{hassin:2011ab}.

Customers are assumed to be homogeneous and thus, we seek a \emph{symmetric equilibrium} which
means that it is a best response against itself. Intuitively, depending on the
relative values of the utility functions $U_b^n,U_j^n$ and $U_o^n$, we can say that an
equilibrium $(P_o,P_j,P_b)$ will satisfy the following:
\begin{subequations}\begin{eqnarray}
P_o=1, P_b=P_j=0, & U_o^n>\max\{U_j^n,U_b^n\}\label{eq:pure1}\\
    P_b=1, P_o=P_j=0, & U_b^n>\max\{U_o^n, U_j^n\}\label{eq:pure2}\\
    P_j=1, P_o=P_b=0, & U_j^n>\max\{U_o^n,U_b^n\}\label{eq:pure3}\\
    P_o=0, 0\leq P_j,P_b\leq 1, & U_b^n=U_j^n>U_o^n\label{eq:max1}\\
    P_j=0, 0\leq P_o,P_b\leq 1, & U_b^n=U_o^n>U_j^n\label{eq:max2}\\
    P_b=0, 0\leq P_j,P_o\leq 1, & U_j^n=U_o^n>U_b^n\label{eq:nash1}\\
    0\leq P_b,P_j,P_o\leq 1, & U_o^n=U_j^n=U_b^n\label{eq:nash2}
  \label{eq:conditions}
\end{eqnarray}
\end{subequations}
We adapt the best response algorithm in~\cite{hassin:2011ab} to the case where
the utility of the outside option $U_b$---which may be balking to other modes of
transit or selecting off--street parking---is not necessarily non-zero.
In particular, we use the above equations to create an
algorithm that allows us to compute the best response (see
Algorithm~1). 
In Algorithm~1, $\vep,\delta>0$ and
$\gamma\in (0,1)$. As $\vep, \delta\rar 0$, the algorithm converges to a Nash
equilibrium since its out put will approach the solution to
\eqref{eq:conditions}.  
We conjecture that the Nash equilibrium is unique and empirically observe this in the simulations. This conjecture is true when the number of players is
infinite and $U_b=0$~\cite{hassin:2011ab}.

\begin{algorithm}[h]
  \centering
\caption{Best Response Algorithm}
\small\begin{algorithmic}[1]
\Function
{getBestResponse}{$P_o,P_b,P_j$, $\vep$, $\delta$, $\gamma$}\\
$\quad$\textbf{while} {$|P_o^\ast-P_o|+|P_b^\ast-P_b|<\delta$}
\State $\ ${$U_j\leftarrow U_j^n(P_j,P_o)$, $U_o\lar U_o^n(P_j,P_o)$}\\
$\ \ \quad$\textbf{if} {$U_o>\max\{U_j,U_b\}+\vep$}:
\State $\qquad$$(P_o^\ast, P_b^\ast, P_j^\ast)\lar(1,0,0)$\\
 $\ \ \quad$\textbf{elif} {$U_j>\max\{U_o,U_b\}+\vep$}
\State $\qquad$$(P_o^\ast, P_b^\ast, P_j^\ast)\lar(0,0,1)$\\
 $\ \ \quad$\textbf{elif} {$U_b>\max\{U_o,U_j\}+\vep$}
\State $\qquad$$(P_o^\ast, P_b^\ast, P_j^\ast)\lar(0,1,0)$\\
$\ \ \quad$\textbf{elif} {$|U_o-U_b|<\vep$ \& $\min\{U_o,U_b\}>U_j+\vep$}
\State $\qquad$$(P_o^\ast, P_b^\ast,
P_j^\ast)\lar\left(P_o/(P_o+P_B),P_b/(P_o+P_b),0\right)$\\
$\ \ \quad$\textbf{elif} {$|U_j-U_b|<\vep$ \& $\min\{U_j,U_b\}>U_o+\vep$}
\State $\qquad$$(P_o^\ast, P_b^\ast,
P_j^\ast)\lar\left(0,P_b,1-P_b\right)$\\
$\ \ \quad$\textbf{elif} {$|U_j-U_o|<\vep$ \& $\min\{U_j,U_o\}>\vep+U_b$}
\State $\qquad$$(P_o^\ast, P_b^\ast,
P_j^\ast)\lar\left(P_o,0,1-P_o\right)$\\
$\ \ \ \quad$\textbf{elif} {any two $\{$$|U_j-U_b|<\vep,|U_o-U_b|<\vep$,
$|U_j-U_o|<\vep\}$}:
\State $\qquad$ $(P_o^\ast, P_b^\ast,
P_j^\ast)\lar(P_o,P_b,P_j)$\\
$\ \ \quad$\textbf{end if}\\
$\quad$\textbf{end while}\\
$\qquad$\textbf{if} $|P_o^\ast-P_o|+|P_b^\ast-P_b|\geq \delta$:
\State $\qquad$ $P_o^+\lar(1-\gamma)P_o^\ast+\gamma P_o$
\State $\qquad$ $P_b^+\lar(1-\gamma)P_b^\ast+\gamma P_b$
\State $\qquad$ $P_j^+\lar(1-\gamma)P_j^\ast+\gamma P_j$
\State $\qquad$ $(P_o,P_b,Pj)\lar (P_o^+,P_b^+,P_j^+)$
\EndFunction
\end{algorithmic}
\label{alg:Nash}
\end{algorithm}
\begin{figure}[h]
  \subfloat[\label{fig:p41-1}]{
  \includegraphics[width=1.0\columnwidth]{./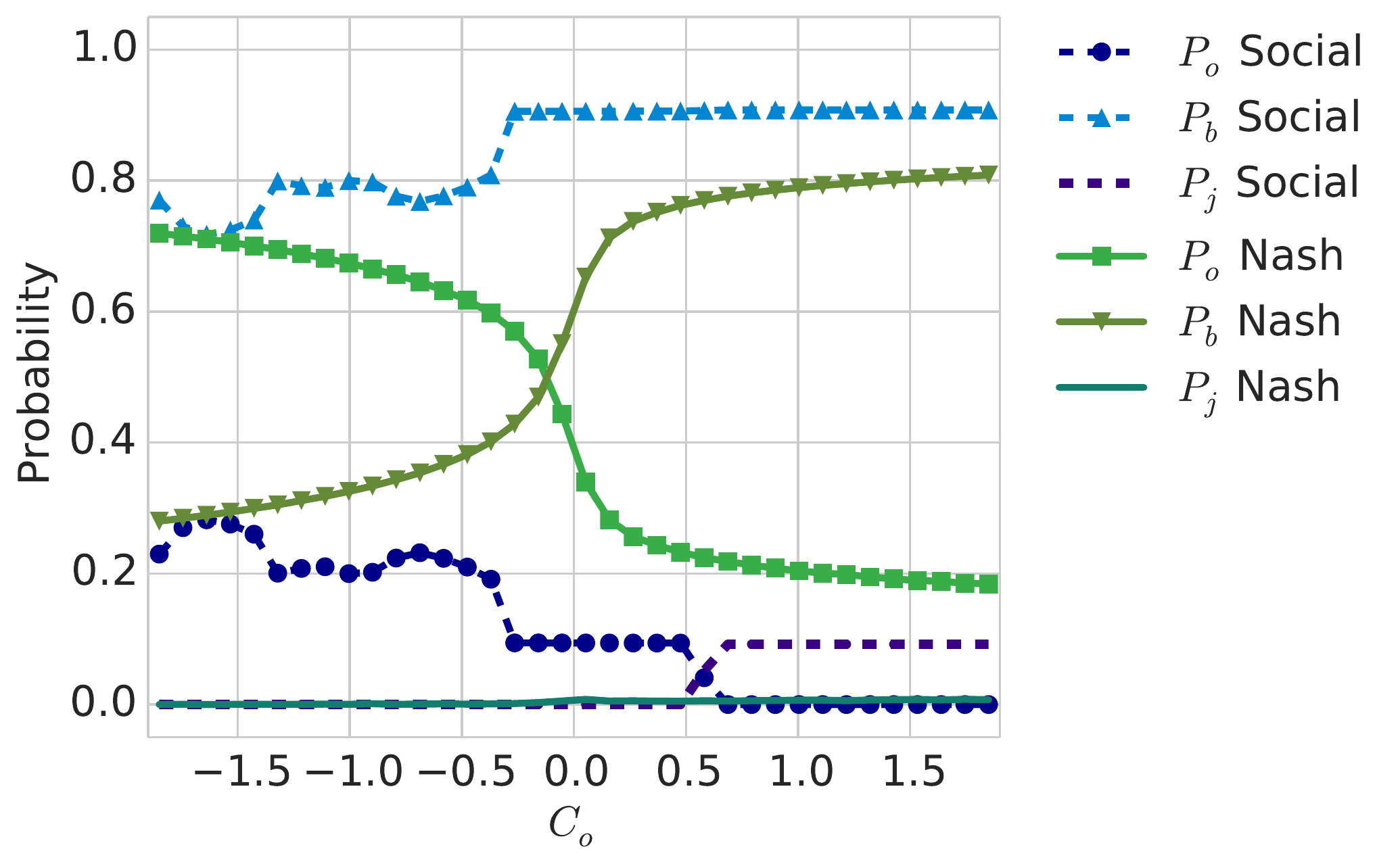}}
 
  \subfloat[\label{fig:p41-2}]{
  \includegraphics[width=0.95\columnwidth]{./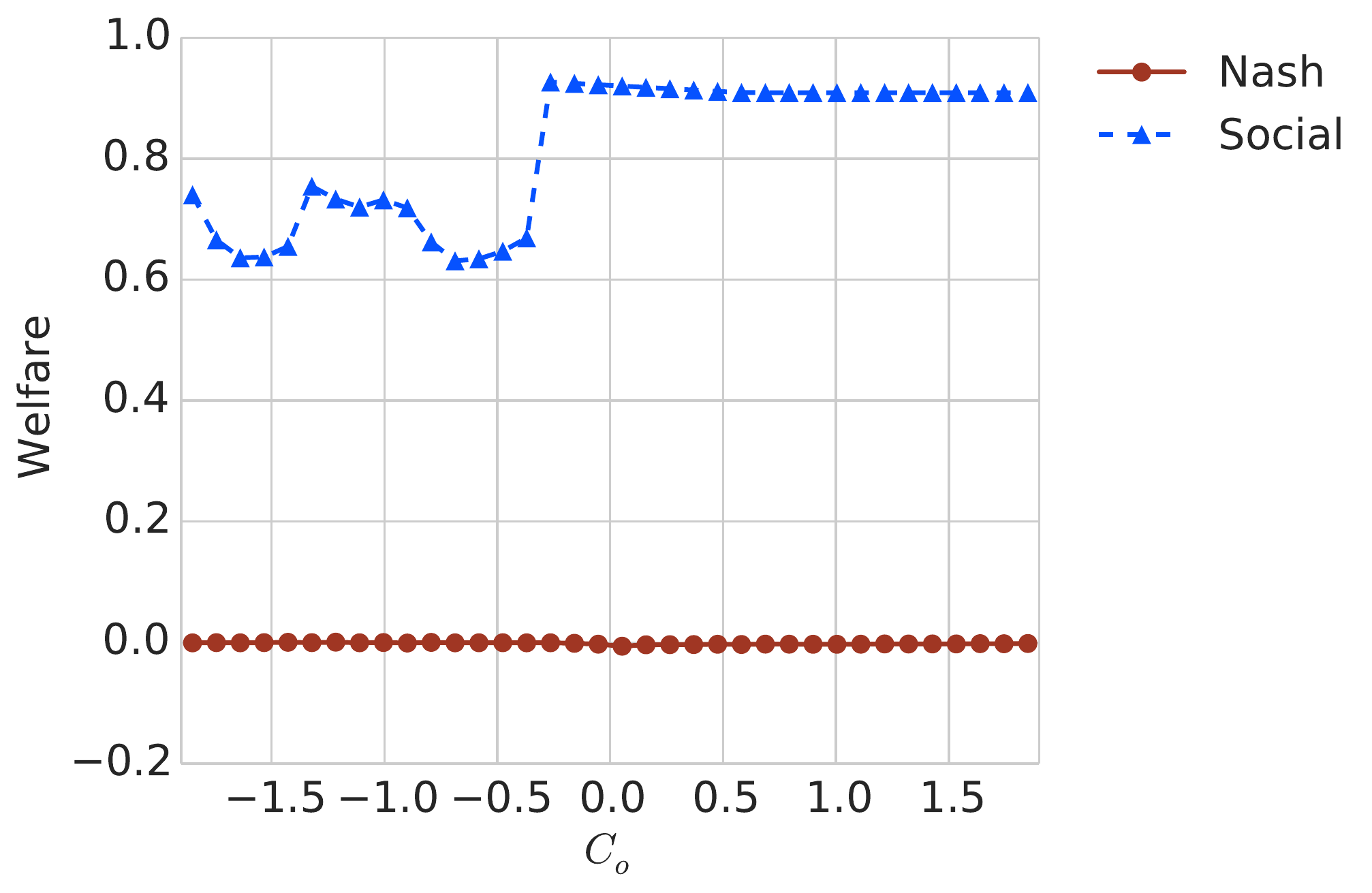}}
  \caption{(a) Nash and socially optimal equilibria for on-street vs. other modes of
  transit balk, join, observe game. (b) Social welfare and
  Nash-induced welfare. We vary $C_o$ between $-1.85$ and $1.85$ (negative
  values of $C_o$ mean the drivers are incentivized to observe) and all other parameters
  have the following values: $\lambda=1/5$, $c=30$,
  $\mu=1/120$, $C_p=0.05$, $R=75.0$, $n=100$, $C_w=1.5$. 
  }
  \label{fig:p41}
\end{figure}

On the other hand, the socially optimal strategy
$(P_o^{so},P_b^{so},P_j^{so})\in \Delta_2$
is determined by maximizing the social welfare which is given by
\begin{align}
  U_{so}^n(P)&=\textstyle P_jU_j^n(P)+ P_oU_o^n(P)+P_bU_b^n(P) \\
  &=\textstyle P_j\lambda
  \sum_{k=0}^{n-1}p_k^n\beta_k+P_o\lambda\left(\sum_{k=0}^{n_b-1}p_k^n\beta_k
  -C_o\right).
  \label{eq:socialwelfare}
\end{align}
As we stated in the previous section, it is well known that, in general, the
social welfare is not maximized by the Nash equilibrium and the Nash induced
welfare is generally less than the social welfare. For the unobservable queueing
game, we compare these the Nash and the social welfare solutions for various
parameters combinations including parameters from real-world data obtained from
the Seattle Department of Transportation. In Figure~\ref{fig:p41-2}, we show an
example of how the socially optimal welfare changes as a function of the cost of
observing $C_o$ while the Nash-induced welfare stays the same roughly the same.
However, both the Nash equilibrium and the socially optimal equilibrium vary
(Fig.~\ref{fig:p41-1})

\begin{figure}[!h]
  \center
\subfloat[ \label{fig:pos3opt-1}]{
\includegraphics[width=1.0\columnwidth]{./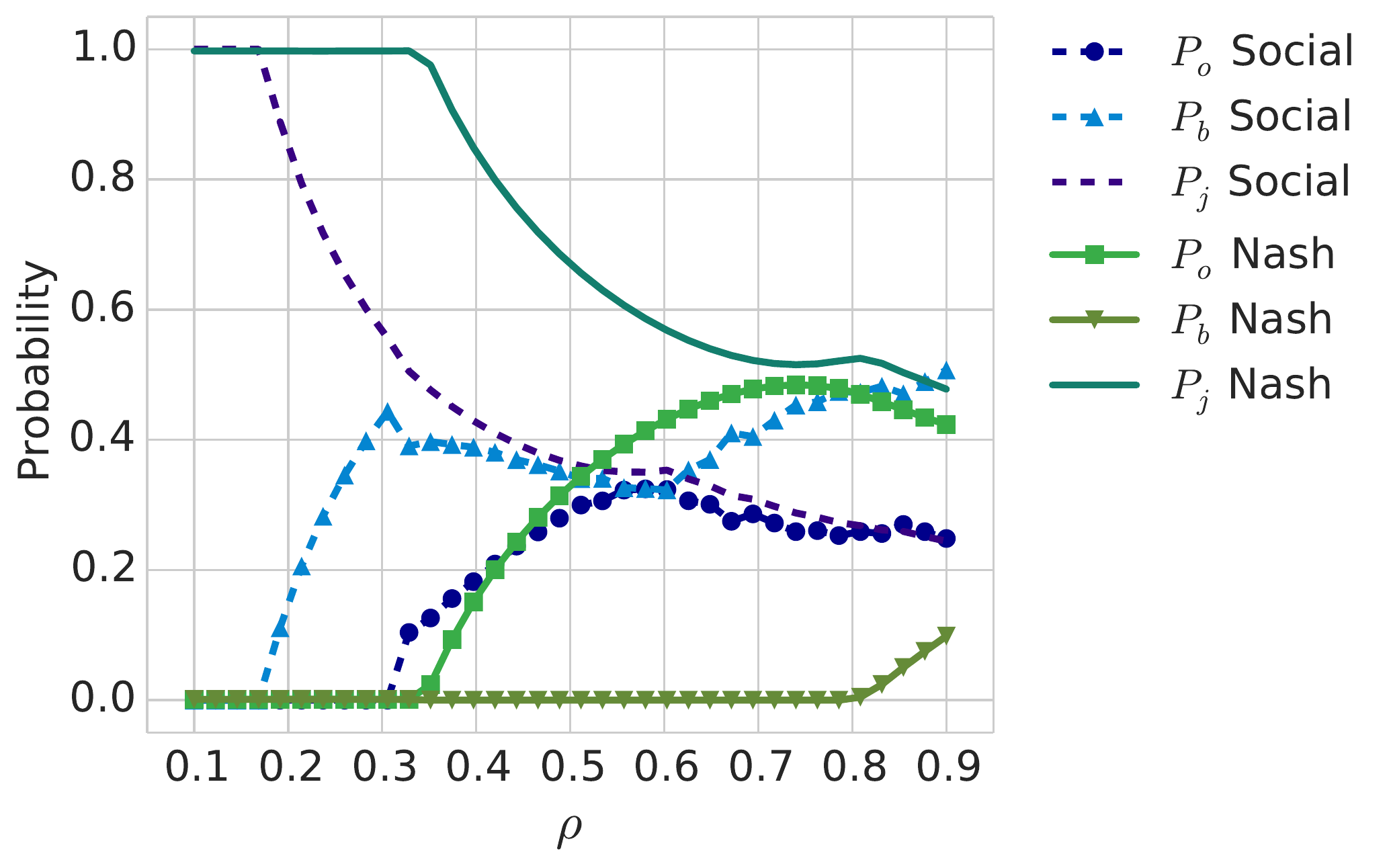}}
  
 \subfloat[ \label{fig:pos3opt}]{
 \includegraphics[width=0.95\columnwidth]{./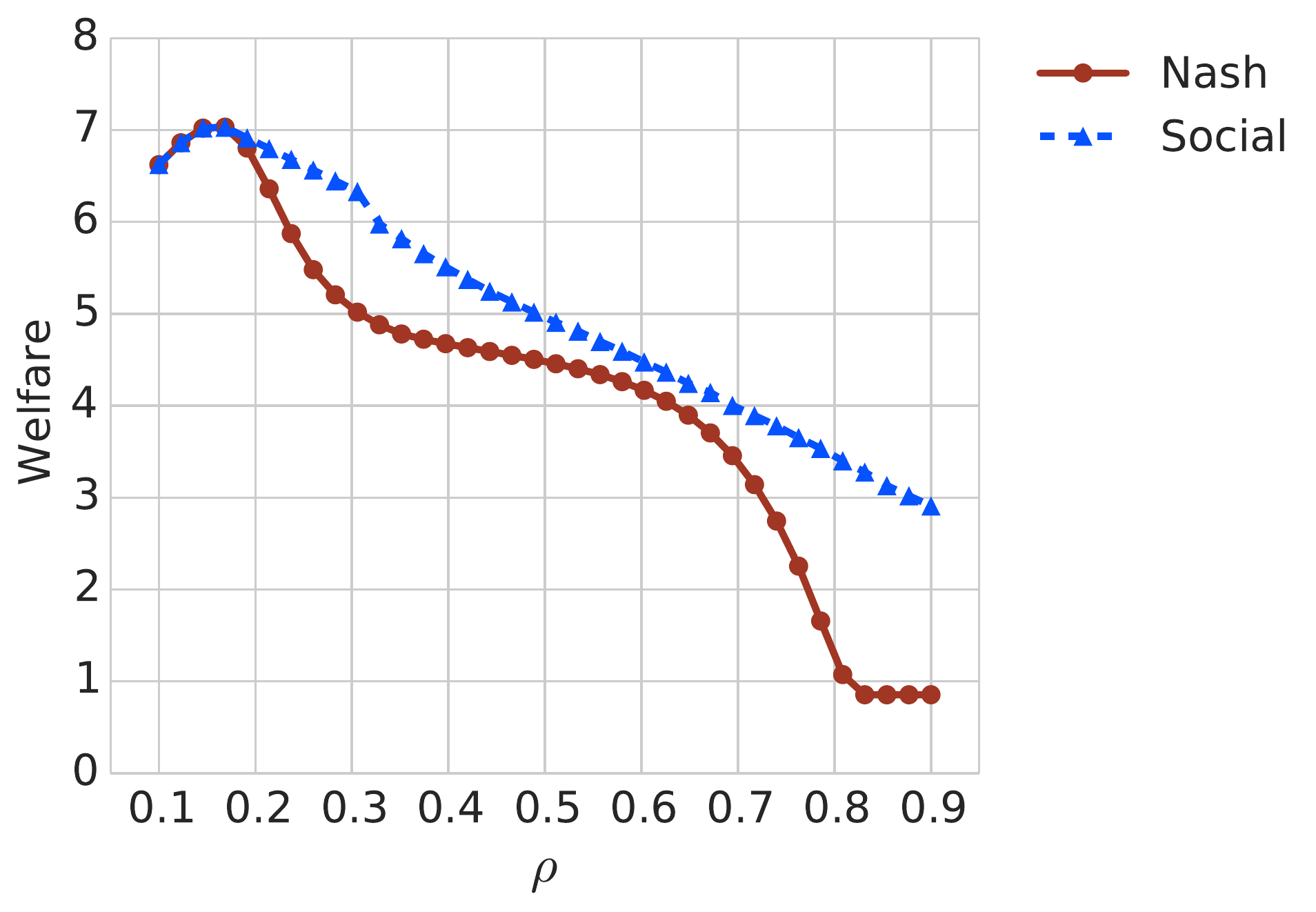}}
  \caption{(a) Nash and socially optimal equilibria. (b) Nash induced welfare vs. maximum social welfare (under socially
  optimal strategy). For both plots, the game we consider is on-street parking
  vs. off-street parking and we varied $\rho=\frac{\lambda}{c\mu}$ by keeping
  $\mu=1/120$ and $c=30$ constant and allowing $\lambda\in[0.025,0.225]$. The other parameter values are
  $C_p=0.05$, $C_o=3.85$, $R=95$, $C_w=1.5$, $n=100$, $C_{off}=0.962$.
  }
  \label{fig:pos3}
\end{figure}

 \begin{figure}[!h]
   \center
  
   \includegraphics[width=1.0\columnwidth]{./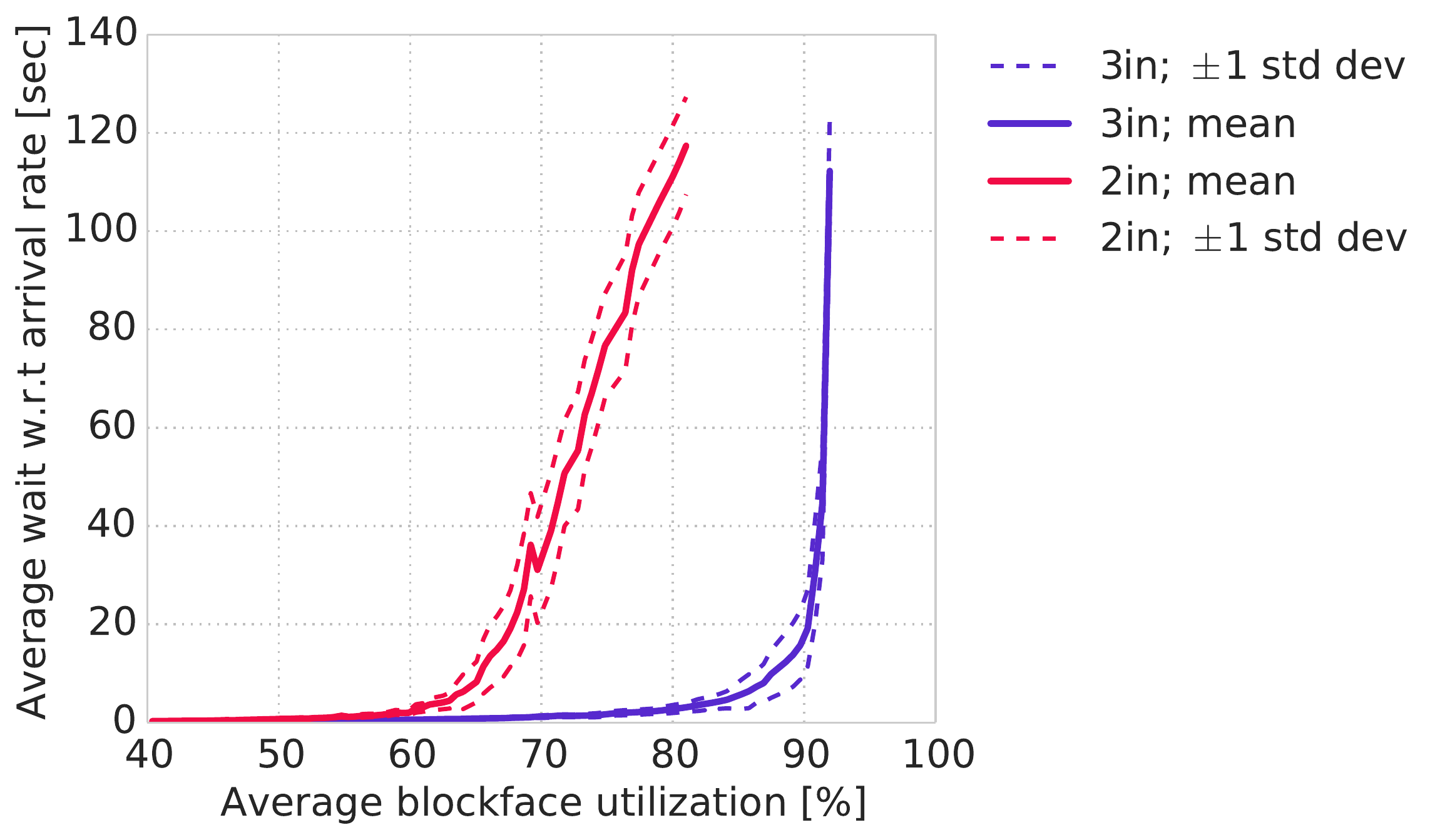}

   \caption{ 
   Average wait time (proxy for congestion) with respect to the
   exogenous arrival rate $\lambda\in[0.6,1.3]$ and fixed service rate per blockface $(c*\mu = 1.0)$ plotted against the average
   block face utilization (proxy for occupancy) for a three node queue--flow
   network with arrivals injected at all three nodes (green) and only at two
   nodes (blue). There is a distinct difference in the occupancy vs. congestion
   curves depending on the network structure and average waits grow unboundedly as $\rho \rightarrow 1$.}
   \label{fig:arrivutilwait}
  \end{figure}

\subsection{Example: On-Street vs. Off-Street Parking}
We now consider that the balking option is to select off-street parking as we
did in Section~\ref{sec:ex1}. In particular, we define
$U_b^n=U_{off}=R-C_{off}/\mu$. The Nash equilibrium can be computed using
Algorithm~\ref{alg:Nash} using $U_b^n=U_{off}=R-C_{off}/\mu$ instead of
$U_b^n=0$. On the other hand, the social welfare is now given by
\begin{align}
  U_{so}^n(P)&=P_jU_j^n(P)+ P_oU_o^n(P)+P_bU_b^n(P) \\
  &\textstyle=P_j\lambda
  \sum_{k=0}^{n-1}p_k^n\beta_k+P_o\lambda\left(\sum_{k=0}^{n_b-1}p_k^n\beta_k
  -C_o\right)\notag\\ 
  &\textstyle\qquad+P_b\lambda\left(R-\frac{C_{off}}{\mu}\right).
  \label{eq:socialwelfare_off_on}
\end{align}
In Figure~\ref{fig:pos3}, we show the Nash equilibrium and the socially optimal
strategy as well as the welfare as a function of $C_o$ for both cases for an example on-street vs.
off-street game.  

\section{Queue--Flow Network Simulations}
\label{sec:flow}
In this section, we present a queue-flow network model over which the games of
the previous two sections are impose. Further, we show the results of simulating
 queue-flow networks with different parameters.

  \subsection{Queue-Flow Simulator}
  Our simulator is written in Python and is freely available to download and 
  test\footnote{\url{https://github.com/cpatdowling/net-queue}}. Requirements and 
  basic instructions are available on Github. The simulator constructs a syncrhonized list of
  blockface (drivers in service) and street (drivers waiting/circling) timers linked according to the
  street topology. For simplicity the simulator treats streets and blockfaces independently: once a driver reaches the end of 
  their drive time on a street, they immediately check the entire blockface they've arrived at for availability. If no
  parking is available, the driver chooses a new destination uniformly at random based on the blockfaces currently
  accessible to them according to the street topology. High timer resolution is maintained to diminish the 
  likelihood of events occurring simultaneously and curbing potential arguments over available
  parking (e.g. a driver circling the block arriving at the same blockface as a new, 
  exogenous arrival from outside the system).
  
  In our current experimental setup, drive times between blockfaces are fixed, but could potentially be
  congestion limited, where drive time would be a function of the number of cars driving 
  on a street between blockfaces.  We consider a 3 block face system with 10 parking
  spaces each, completely connected by two-way streets. Although each blockface can be
  considered to have its own exogenous arrival rate, to facilitate the game strategies, we have
  a single source with arrival rate $\lambda$. Drivers who do not immediately choose to balk
  arrive at a uniformly random blockface where they either observe or join directly.

   \subsection{Congestion vs. Occupancy}
 The congestion--occupancy relationship is an important one to understand when
 it comes to designing the price of parking or information---e.g.~using a smart
 device with a subscription fee to observe congestion in various parking 
 zones---in order to reduce parking-related congestion. Many municipalities and
 researchers
 design pricing schemes to target a single occupancy level---typically
 \%80---for all blockfaces in a city. Not taking into account the network
 topology and node type---source or sink---can be detrimental to a pricing
 scheme. 
 In Figure~\ref{fig:arrivutilwait}, using the queue--flow simulator for a queue--flow network with
 three queues,  we show that the congestion--occupancy relationship can be
 drastically different depending on how many nodes are treated as sources for
 injections. In particular, the upper bound for utilization (before wait time
 exponentially increases) for the 3-node injection case is around 88\% while the
 2-node injection case is around 65\%. The queue--flow modeling paradigm alone  can be a useful tool for designing
discriminative pricing or information schemes, accounting for network topology.
 \begin{table*}[!t]
   \centering
   \begin{tabular}{|l|l|l|l|l|l|}
     \arrayrulecolor{white}\hline\arrayrulecolor{white}\hline
\arrayrulecolor{white}\hline
\arrayrulecolor{white}\hline
\arrayrulecolor{white}\hline
\arrayrulecolor{white}\hline
\arrayrulecolor{white}\hline
\arrayrulecolor{white}\hline
\arrayrulecolor{white}\hline
\arrayrulecolor{white}\hline
\arrayrulecolor{white}\hline
\arrayrulecolor{white}\hline
\arrayrulecolor{white}\hline
\arrayrulecolor{white}\hline

    \arrayrulecolor{black}  \hline
     \textbf{On-Street Parking vs. Other Modes of Transit} &  & Equilibrium & & &\\
     Parameters & Type &   $(P_o,P_b,P_j)$ & Utilization & Avg. Wait &
     Welfare \\
\hline
$\lambda=1/5$, $C_o=0.25$, $R=75$, $C_w=0.8$, $C_p=0.05$&SO &
$(0.00,0.58,0.42)$   &$33.2$\% & $0.002$ &  $2.80$ \\
\cline{2-6}
 &N & $(0.85,0.13,0.02)$ & $69.3$\% & $0.359$ & $0.00$  \\
 \hline
$\lambda=1/4.85$, $C_o=0.5$, $R=75$, $C_w=0.75$, $C_p=0.05$&SO &
$(0.00,0.56,0.44)$   &$34.9$\% & $0.002$ &  $3.02$ \\
\cline{2-6}
 &N & $(0.84,0.09,0.07)$ & $77.9$\% & $0.901$ & $0.00$  \\
\hline
$\lambda=1/4.5$, $C_o=2.0$, $R=75$, $C_w=0.5$, $C_p=0.075$&SO &
$(0.00,0.4,0.6)$   &$52.3$\% & $0.04$ &  $4.27$ \\
\cline{2-6}
 &N & $(0.55,0.00,0.45)$ & $88.0$\% & $3.69$ & $2.68$  \\
\hline
  \hline
     \textbf{On-Street vs. Off-Street Parking} & & &  & & \\
     Parameters &  Type&  Eq.: $(P_o,P_b,P_j)$ & Utilization & Avg. Wait &
     Welfare \\
\hline
$\lambda=1/4.5$, $C_o=3.85$, $R=65$, $C_w=1.5$, $C_{off}=0.962$,&SO& 
$(0.47,0.19,0.34)$   &$69.9$\% & $1.99$ &$6.58$ \\ 
\cline{2-6}
$C_p=0.05$ &N& $(0.49,0.00,0.51)$ & $84.0$\% & $7.77$ & $1.85$ \\
 \hline
 $\lambda=1/4.75$, $C_o=3.85$, $R=65$, $C_w=1.5$, $C_{off}=0.962$,&SO& 
$(0.5,0.14,0.36)$   &$70.6$\% & $2.23$ &$9.23$ \\ 
\cline{2-6}
$C_p=0.05$ &N& $(0.53,0.00,0.47)$ & $81.0$\% & $5.96$ & $7.19$ \\
 \hline
   \end{tabular}
   \caption{ Queue--Flow Network Game Simulation Results: for each of the
   simulations we set the total number of parking spaces to be $c=30$, the average parking
   duration is $120$ minutes ($\mu=1/120$) which is consistent with the Seattle
   Department of Transportation
   data. We use the
   shorthand SO for socially optimal and N for Nash.}
   \label{tab:results}
 \end{table*}

\subsection{Costly Observation Queuing Game Simulations}
 Coupling the queue-flow network with the game theoretic models of the previous
 sections, we simulate the queueing game and its impact on network flow (average
 wait time) and
 on-street parking utilization (occupancy).
Given a queue-flow network topology, for simplicity, we assume that each of the queues has the
same service rate $\mu$. In addition, we suppose that the total number of
parking spots (servers) across all queues in the network is $c$, the arrival rate to the
 queuing network is $\lambda$, and the capacity of queue-flow network 
is $n$. This allows us to model the whole queuing
system as a M/M/$c$/$n$ queue to which we apply the costly observation queuing game for
various parameters combinations.

We execute our simulation as follows. First, we determine the equilibrium
of the game---or the
socially optimal strategy depending on which we intend to simulate---and then, we use the simulator described above to
determine the average waiting time and utilization. The game only effects the
arrival process of the queue--flow network; once arriving drivers enter the
network, the queue--flow simulator determines the drivers
impact on the system and the waiting time they experience.

Given a strategy $(P_o, P_b, P_j)$---either a Nash equilibrium
computed via Algorithm~\ref{alg:Nash} or a social optimum computed by maximizing
\eqref{eq:socialwelfare}---we sample from the Poisson distribution with
parameter $1/\lambda$ to determine the arrival time of the next driver. Then, we
sample from the distribution determined by $(P_o, P_b, P_j)$ to decide if the
arriving driver will balk, join without observing, or pay to observe. If the driver balks, then we
discard this arriving car. If the driver joins without observing, then we
determine which node the driver enters by randomly choosing (using a uniform
distribution) a queue in the network. If the driver pays to observe, then we
examine the length of each queue in the system and the driver joins the queue
with the shortest length as long as it is less than the balking
rate. 

\begin{figure}[!t]
  \begin{center}
    \subfloat[\label{fig:pos7util}]{\includegraphics[width=0.925\columnwidth]{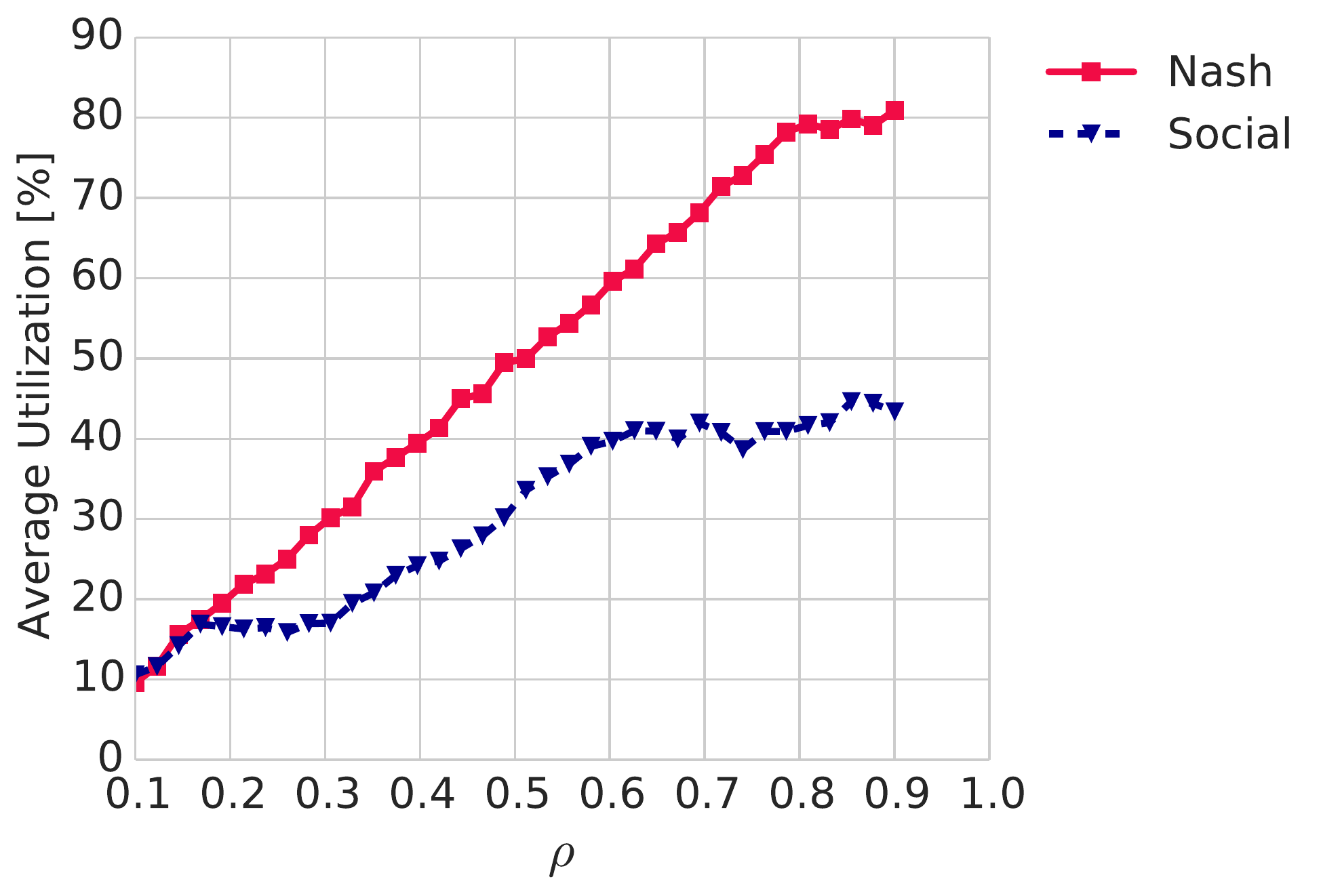}}
    \hspace{0.1in}\subfloat[\label{fig:pos7wait}]{\includegraphics[width=0.925\columnwidth]{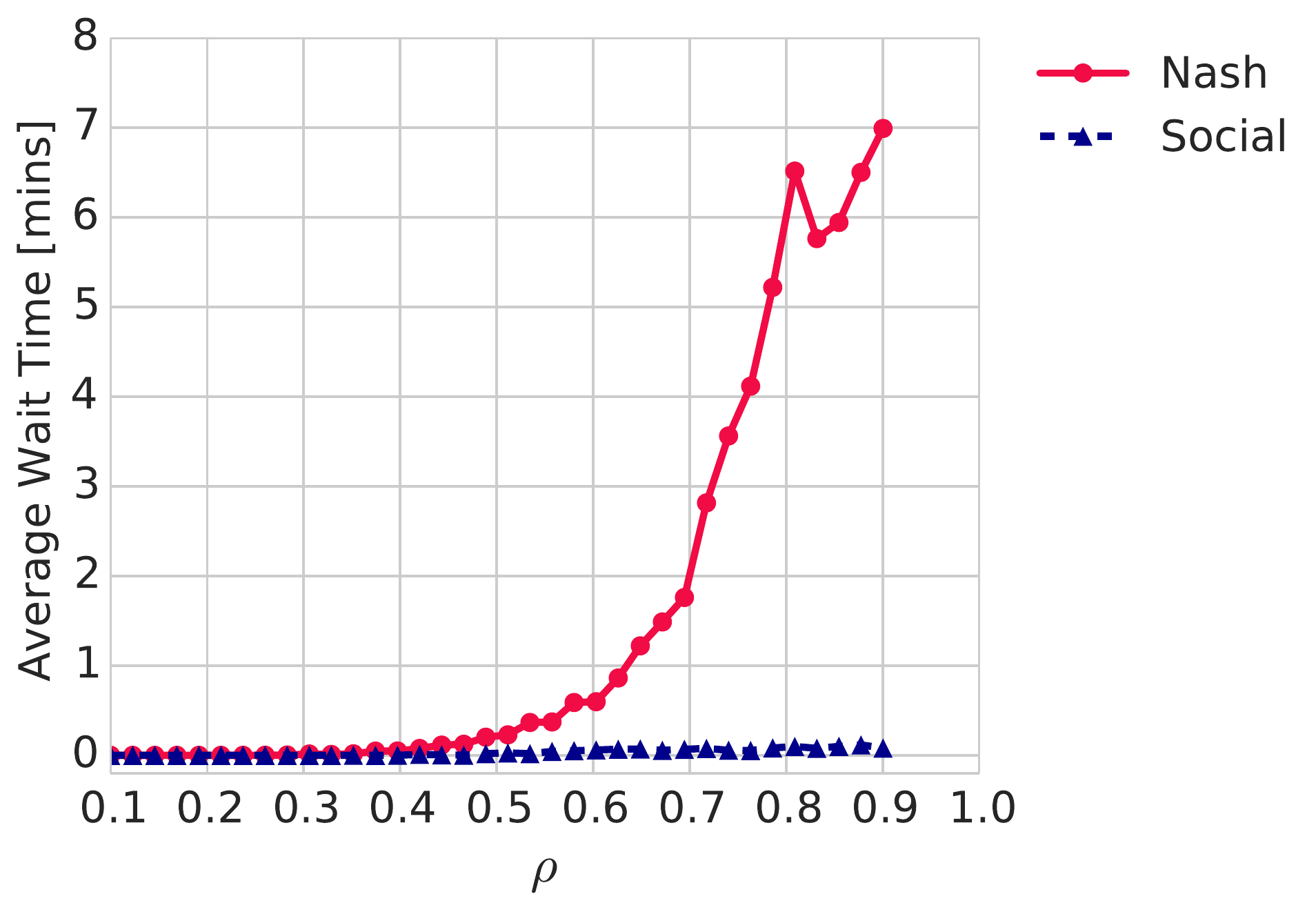}}
  \end{center}
  \caption{(a) Average wait time (a) and (b) average blockface utilization as a
  function of $\rho=\frac{\lambda}{c\mu}$ where
  $c=30$ and $\mu=1/120$ are fixed and $\lambda\in[0.025,0.225]$ for a three node system. The other parameter values are
  $C_p=0.05$, $C_o=3.85$, $R=95$, $C_w=1.5$, $n=100$, $C_{off}=0.962$. The Nash
  equilibrium and the socially optimal equilibrium varies with $\rho$ and is
  depicted in Figure~\ref{fig:pos3}. }
  \label{fig:waitutil}
\end{figure}

In Table~\ref{tab:results}, we show the results of simulations for both the
costly observation game simulations for the costly observation queuing game and the
on-street vs. off-street example. We explore different parameter
combinations and show  the utilization rate, average wait time, social welfare, and the
Nash-induced welfare. 
The social welfare is always higher than the Nash welfare,
which is to be expected. The utilization rate and average wait time are always less
under the socially optimal strategy than the Nash equilibrium. 

In Figure~\ref{fig:waitutil}, we show the result of simulating both the Nash
equilibrium and the socially optimal strategy for various values of the traffic
intensity $\rho$ (holding all other parameters fixed). These simulations are for
the same games depicted in Figure~\ref{fig:pos3}. As the traffic intensity
increases, we see that both the Nash and socially optimal utilization increase
almost linearly
with the Nash utilization remaining greater.
The socially optimal equilibrium in all cases keeps waiting times for parking---our current surrogate for congestion---uniformly less than the Nash equilibrium. Intuitively this makes sense: given a finite resource---parking---the socially optimal strategy ensures this resource is more freely available. On the other hand, the Nash strategy more efficiently utilizes the resource to the extent of its availability.
Another interesting thing to notice is the drop in wait time for the Nash solution at $\rho=0.8$. If we
look at Figure~\ref{fig:pos3opt}, we see that the probability for balking $P_b$
in the Nash solution suddenly becomes non-zero at $\rho=0.8$. This is likely to
be the cause of the drop in wait time; however, as $\rho\rar 1$, we expect the
wait time to blow-up so that after the drop, wait time continues to increase.

Of the two scenarios (costly observation and on-street vs. off-street), the
on-street vs. off-street parking more closely resembles
reality in the sense that the off-street option exists. One might guess then,
that a user will maximize their utility in either the Nash or socially optimal
case by frequently taking advantage of the ability to observe before choosing
where to park, given the nature and travel constraints of the system. What
surprises us is that only partial information availability amongst the
users---as seen in Table~\ref{tab:results}, Figures~\ref{fig:p41} and
\ref{fig:pos3} where $P_o\neq 1$ for the socially optimal solution---is required
to increase social welfare. Moreover, it seems the socially optimal equilibrium strategy requires \emph{less} information availability.

From a municipality's perspective, this is a useful result when designing a
socially optimal parking infrastructure. Not everyone will know information
about parking availability in the first place (e.g. tourists vs. residents). Even from a more practical point of view, this is a useful result in that reaching 100\% information availability for all drivers is an economically infeasible task, requiring more resources than could likely be justified.

\section{Discussion and Future Work}
\label{sec:discussion}
We presented a framework for modeling parking in urban environments as parallel
queues and we overlaid a game theoretic structure on the queuing system. We
investigated both the case where drivers have full information---i.e.~observe
the queue length---and where drivers have to pay to access this information.
We show in both cases that the social welfare is less under the Nash equilibrium
than the socially optimal solution and we show that only partial information
is required to increase social welfare. Finally, through simulations we connect the
queuing game to a flow network model in order to characterize wait time
(congestion) versus utilization (occupancy).

In future work, by capitalizing on the game-theoretic model, we aim to use a mechanism design
framework to shift the user-selected equilibrium to a more socially
efficient one by selecting the cost of information and the price of parking that
optimizes social welfare.
Furthermore, we plan to optimize the for the amount parking-related congestion
that
contributes to over all congestion; in particular, we plan to optimize the
social welfare as a function of the capacity of the queue. 
We plan to relax the homogeniety assumption by considering 
players with different preferences such as walking time to destination and
different priority levels such as disabled placard holders. 
Furthermore, we aim to couple the parallel queue game model with classical network flow
models for traffic flow so that we can develop an understanding of the
fundamental relationship between congestion and parking. We view the work in
this paper as the first steps toward developing a comprehensive modeling
paradigm in which the queuing behavior for parking and traffic flow are
captured.

\bibliographystyle{IEEEtran}
\bibliography{2016cdc}

%

\end{document}